  \providecommand\BibTeX{{%
    \normalfont B\kern-0.5em{\scshape i\kern-0.25em b}\kern-0.8em\TeX}}}
\acrodef{CF}{collaborative filtering}
\acrodef{LTR}{learning to rank}
\acrodef{NDCG}{normalized discounted cumulative gain}
\acrodef{DCG}{discounted cumulative gain}
\acrodef{VAE}{variational autoencoder}
\acrodef{VAE}{variational autoencoder}
\acrodef{ELBO}{evidence lower bound objective}
\acrodef{IPS}{inverse propensity scoring}
\acrodef{BPR}{bayesian personalized ranking}
\acrodef{MF}{matrix factorization}
\acrodef{MNAR}{missing-not-at-random}
\acrodef{ULTR}{unbiased learning-to-rank}
\acrodef{CLTR}{counterfactual learning to rank}
\acrodef{LOLN}{law of large numbers}
\acrodef{CRM}{counterfactual risk minimization}
\acrodef{IS}{importance sampling}
\acrodef{i.i.d}{independent and identically distributed}
\acrodef{CRM}{counterfactual risk minimization}
\acrodef{PL}{Plackett-Luce}
\acrodef{CTR}{click through rate}
\acrodef{SEA}{safe exploration algorithm}
\acrodef{GENSPEC}{generalization and specialization }
\acrodef{VCRM}{variational counterfactual risk minimization}
\acrodef{SGD}{stochastic gradient descent}
\acrodef{DR}{doubly robust}
\acrodef{DM}{direct method}
\acrodef{ERC}{exposure ratio clipping}
\acrodef{IR}{information retrieval}
\theoremstyle{definition}
\newcommand{\headernodot}[1]{\vspace{1mm}\noindent\textbf{#1}}
\newcommand{\header}[1]{\headernodot{#1.}}
\keywords{Learning to rank; Counterfactual learning to rank}
\author{Shashank Gupta}
\affiliation{%
	\institution{University of Amsterdam}
	\city{Amsterdam}
	\country{The Netherlands}
}
\email{s.gupta2@uva.nl}
\author{Yiming Liao}
\affiliation{%
	\institution{Meta AI}
	\city{New York}
	\country{United States of America}
}
\email{yimingliao@meta.com}
\author{Maarten de Rijke}
\affiliation{%
	\institution{University of Amsterdam}
	\city{Amsterdam}
	\country{The Netherlands}
}
\email{derijke@uva.nl}
\begin{document}

\newcommand\yiming[1]{\textcolor{blue}{[Yiming]: #1}}

\title[Towards Two-Stage Counterfactual Learning to Rank]{Towards Two-Stage Counterfactual Learning to Rank}

\begin{abstract}
\Ac{CLTR} aims to learn a ranking policy from user interactions while correcting for the inherent biases in interaction data, such as position bias. Existing \ac{CLTR} methods assume a single ranking policy that selects top-$K$ ranking from the entire document candidate set. In real-world applications, the candidate document set is on the order of millions, making a single-stage ranking policy impractical. In order to scale to millions of documents, real-world ranking systems are designed in a two-stage fashion, with a candidate generator followed by a ranker. The existing \ac{CLTR} method for a two-stage offline ranking system only considers the top-1 ranking set-up and only focuses on training the candidate generator, with the ranker fixed. A \ac{CLTR} method for training both the ranker and candidate generator jointly is missing from the existing literature.

In this paper, we propose a two-stage \ac{CLTR} estimator that considers the interaction between the two stages and estimates the joint value of the two policies offline. In addition, we propose a novel joint optimization method to train the candidate and ranker policies, respectively. To the best of our knowledge, we are the first to propose a \ac{CLTR} estimator and learning method for two-stage ranking. Experimental results on a semi-synthetic benchmark demonstrate the effectiveness of the proposed joint \ac{CLTR} method over baselines. 
\end{abstract}

\maketitle

\acresetall

\section{Introduction}
A \ac{LTR} method aims to optimize a ranking policy to maximize a given \ac{IR} metric~\cite{liu2009learning}. Traditionally, \ac{LTR} policies were trained using manually curated queries and document relevance judgments. It is well known now that manually curating relevance judgments is time consuming, not scalable, and does not always translate to user preferences~\cite{sanderson2010test,chapelle2011yahoo}. As an alternative, \ac{LTR} from user interactions is scalable because user interactions are cheaper to collect on scale, and user interactions are generally aligned with user interests~\cite{karatzoglou2013learning}. However, user interactions are noisy and biased indicators of true user preferences, subject to biases such as position bias, and trust bias in search engines~\cite{agarwal2019addressing,joachims2017unbiased}. 

\Ac{CLTR} aims to learn a ranking policy while correcting for biases in user interaction data~\cite{joachims2017unbiased,oosterhuis2020unbiased,gupta2023recent,gupta2024unbiased}. \Ac{IPS} is the most common choice of estimator for \ac{CLTR}~\cite{oosterhuis2020policy,gupta2023safe}. \ac{IPS} weights each user interaction on a document with the inverse of the document's exposure probability, also known as document exposure propensity. Thus, documents with historically lower exposure propensity receive a higher weight and vice versa. In expectation, this procedure optimizes for unbiased document relevance~\cite{joachims2016counterfactual,joachims2017unbiased,agarwal2019addressing}.

Traditionally, \ac{CLTR} methods assume a single re-ranking policy, i.e., for a given query, the re-ranking policy selects top-$K$ items from the \emph{candidate document set}, generated via a (lightweight) candidate generator. In a typical \ac{CLTR} pipeline, the candidate generator is fixed, and only the re-ranker is updated with the \ac{CLTR} method~\cite{gupta2024unbiased}. To the best of our knowledge, there is no \ac{CLTR} method that jointly updates both candidate generator and the re-ranker.

In the context of offline contextual bandit learning, \citet{ma2020off} introduced the first two-stage off-policy correction method. The proposed estimator considered the interaction between the candidate generator and the contextual bandit policy. Although effective, the method has the following limitations: \begin{enumerate*}[label=(\roman*)]
    \item it is only designed for the contextual bandit setup (top-1 ranking) and extending it to the \ac{CLTR} setup is non-trivial since existing \ac{CLTR} estimators do no consider the interaction between different policies in a cascading setup~\cite{agarwal2019general,joachims2017unbiased,oosterhuis2020policy}; and 
    \item it only trains the candidate generator using the two-stage estimator while keeping the second-stage bandit policy fixed, typically pre-trained. 
\end{enumerate*}

We argue that this strategy is not optimal. Initially trained using logged data generated from the production candidate generator, the re-ranker faces potential performance issues when a new candidate generator is deployed with the new candidate policy. This causes a distribution shift in the input data distribution for the re-ranker, potentially leading to an overall degradation in the re-ranker performance when deployed online. To mitigate this, we propose a joint optimization method to optimize the re-ranker and candidate generator via an alternating optimization fashion. Experimental results on a semi-synthetic benchmark demonstrate that the proposed joint optimization method outperforms the two-staged baseline method, where the re-ranker and candidate generator are trained independently. To the best of our knowledge, this work is the first to develop a \ac{CLTR} estimator and learning method for the two-stage LTR system. We hope our contributions will enable \ac{LTR} practitioners to apply \ac{CLTR} methods at a real-world scale. 

\section{Related Work}
\subsection{\Acl{CLTR}}
In the context of \ac{LTR}, \citet{joachims2017unbiased} introduced the first counterfactual learning method to correct for position bias in search engines~\cite{joachims2002optimizing}. They applied the \ac{IPS} weighting technique, common in the offline contextual bandit literature~\cite{saito2021counterfactual}. For each user interaction with a document displayed by the ranking system, a click is weighted with the inverse of the document examination probability, a.k.a. document propensity. With the \ac{IPS} weighting scheme, the effect of the position bias is removed in expectation. As an extension, \citet{oosterhuis2020policy} introduced a policy-aware \ac{CLTR} method for the top-$K$ ranking setup. In a top-$K$ ranking setup, any document ranked beyond the position $K$ gets zero exposure. In this work, we deal with the top-$K$ setting. For a more recent overview of \ac{CLTR}, we refer to \citep{gupta2023recent}.

\subsection{Offline policy learning for contextual bandits}
In the context of offline contextual bandits, \ac{IPS} is commonly used to correct for selection bias over actions introduced by the previously deployed behavior policy~\cite{swaminathan2015batch,swaminathan2017off,xie2018off,schnabel2016recommendations,joachims2018deep,wu2018variance,gupta2023deep,gupta2024optimal}. Briefly, for each (action, context) pair, the corresponding reward is weighted by the ratio of corresponding action probabilities under the new policy and the behavior policy. In expectation, this weighting scheme removes the effect of selection bias introduced by the behavior policy. 

\vspace{-2.mm}
\subsection{Two-stage offline policy learning}
Most work on offline policy learning for contextual bandits and \ac{LTR} consider a single policy, typically the re-ranking policy, that re-ranks all the candidate documents for a given query context, generated via a candidate generation policy~\cite{saito2021counterfactual,gupta2023recent}. Typically, the candidate generation policy is fixed during the off-policy update of the re-ranker, and the interaction between the two policies is generally ignored.

In the context of \ac{LTR} from manual relevance judgments, \citet{dang2013two} introduced the first joint two-stage ranking method, which considers the interaction between the re-ranker and candidate policy and jointly updates both. \citet{wang2011cascade} introduced an efficient boosting-based method for joint two-staged ranking.

In offline contextual bandits, \citet{ma2020off} proposed a two-stage off-policy learning objective that considers the interaction between the first and second-stage contextual bandit policies. They propose a policy learning method to train the candidate generator considering the response of the re-ranker on the candidate generated at the first stage. 

\vspace{-2mm}
\section{Background}
\subsection{\Acl{CLTR}}
For a given query $q$ ($q \sim P(Q)$), a ranked list $y=\{d_1,d_2,\ldots,d_k\}$ generated by a ranker $\pi$, we define the utility of the list $y$ as:
\begin{equation}
    U(y \mid q, \pi) = \sum_{d \in D} \alpha_{k(d)} P(R = 1 \mid q, d) = \sum_{d \in D} \alpha_{k(d)} R_{q,d},
    \label{list-utility}
\end{equation}
where $k(d)$ is the rank of the document $d$ in the list $y$, and $\alpha_{k(d)}$ is the weight for the document $d$ at the rank $k(d)$, and $P(R = 1 \mid q, d)=R_{q,d}$ is the probability of relevance for the query, document pair $(q,d)$. Given the ranking utility, the goal of a \ac{LTR} algorithm is to train a ranking policy $\pi$ to optimize the following overall utility function: 
\begin{equation}
\begin{split}
    U(\pi) & = \mathbb{E}_{q \sim P(Q)}  \mathbb{E}_{y \sim \pi( \cdot \mid q)} \left[ U(y \mid q, \pi) \right] \\
    & = \mathbb{E}_{q} \left[ \sum_{d \in D} \rho(d \mid q, \pi) R_{q,d}  \right],
\end{split}    
    \label{true-utility}
\end{equation}
where the choice of the document weight $\rho(d \mid q, \pi)=\mathbb{E}_{y \sim \pi} \!\left[ \alpha_{k(d)} \right]$ (Eq.~\ref{list-utility}) defines the IR metric that is being optimized as a result. For example, the choice of $\alpha_{k(d)} = (\log_{2}(\textrm{rank}(d \mid y) + 1))^{-1}$ means that we are optimizing for the NDCG metric~\cite{liu2009learning}. 

Traditional \ac{LTR} methods assume access to the \emph{true} manually graded document relevance scores: $R_{q,d}$ (Eq.~\ref{list-utility}). However, manually curating relevance judgments in practice is not scalable, and relevance judgments are generally not aligned with user interests~\citep{chapelle2011yahoo,qin2010letor}. As an alternative, \ac{CLTR} considers the user interactions directly. \ac{CLTR} assumes access to the click log dataset $\mathcal{D}$ from a logging policy ($\pi_0$), $ \mathcal{D} = \big\{q_i, y_i, c_i \big\}^N_{i=1},
    \label{logs}$
with user issued queries $q_i$, the ranked list from the ranking policy $y_i$, and, finally, user clicks ($c_i \in {0,1}$) on the list $y_i$ indicate whether the document was clicked or not. In this work, we assume that user clicks follow the examination hypothesis~\cite{chuklin-click-2015,joachims2017unbiased}:
\begin{equation}
    P(C=1 \mid d, q, k) = P(E=1 \mid k)  P(R=1 \mid d, q),
    \label{click-model}
\end{equation}
which states that for a given query $q$, the probability of clicking on a document $d$ depends on the probability of examination at a given rank $k$ of the document and its relevance.

Given the click model (Eq.~\ref{click-model}), to define the \ac{IPS}-based \ac{CLTR} objective, following \citet{oosterhuis2020policy}, we define the document propensity under a policy $\pi$ as: 
\begin{equation}
\rho(d \mid q, \pi) = \mathbb{E}_{y \sim \pi} \left[  P(E=1 \mid k(d)) \right].
\label{eq:propensity_def}
\end{equation}
A similar propensity definition was used in other \ac{CLTR} work~\cite{yadav2021policy,gupta2023safe,oosterhuis2021unifying}. The choice of rank-based examination probability as the propensity sets the overall utility function as the expected number of clicks (Eq.~\ref{true-utility}). Finally, the \ac{CLTR} utility function is defined as:
\begin{equation}
    \hat{U}(\pi) = \frac{1}{N} \sum_{i=1}^{N} \sum_{d \in D}  \frac{\rho(d)}{\rho_{0}(d)} c_i(d),
    \label{cltr-obj}
\end{equation}
with $\rho_{0}(d)$ as the propensity score (Eq.~\ref{eq:propensity_def}) for the logging policy $\pi_0$.

\subsection{Two-stage off-policy learning}

In the context of offline contextual bandits, \citet{ma2020off} proposed a two-stage off-policy learning method, where the objective function is defined with respect to the bandit feedback dataset:
\begin{equation}
    \mathcal{D} = \big\{s_i, a_i, r_i \big\}^N_{i=1},
    \label{logs}
\end{equation}
where $s_i$ is the context/state, $a_i \sim \pi(\cdot \mid s_i)$ is the action sampled from the policy $\pi$, and $r_i$ is the reward for the (state, action) pair. The policy $\pi$ is defined as the mixture of the candidate policy which scores a size $k$ list: $A^k$,  $p(A^k \mid s)$ and the re-ranker $q(a \mid A^k, s)$: $\pi(a \mid s) = \sum_{A^k}^{} p(A^k \mid s) q(a \mid A^k, s)$.
Given this, the two-stage off-policy objective is defined as:
\begin{equation}
    \hat{U}(\pi) = \frac{1}{N} \sum_{i=1}^{N}  \frac{\pi(a_i \mid s_i)}{\pi_{0}(a_i \mid s_i)} r_i.
    \label{opl-obj}
\end{equation}
In their proposed method, \emph{only} the candidate policy $p(A^k \mid s)$ is updated via a stochastic gradient method; the re-ranker $q(a \mid A^k, s)$ is kept fixed. 

\section{Method: Two-stage Counterfactual Learning to Rank}
This section introduces our main contribution: a novel two-stage \ac{CLTR} objective and a joint learning method. 

\subsection{Two-stage \ac{LTR}}
We first define the two-stage \ac{LTR} objective with relevance labels. Given a candidate generator policy $\pi_c$, the candidate list $y_c \sim \pi_c$ of length $K_2$, the re-ranker policy $\pi_r$, and the final ranked list displayed to the user $y_r \sim \pi_r$ of length $K$, similar to the utility function for a given list in a single-stage setup $U(y \mid q, \pi)$ (Eq.~\ref{list-utility}), the two-stage \ac{LTR} objective is given by:
\begin{align}
    U(\pi_{c}, \pi_{r}) &= \mathbb{E}_{q \sim P(Q)}  \mathbb{E}_{y_c \sim \pi_{c}( \cdot \mid q)} \mathbb{E}_{y_r \sim \pi_{r}( \cdot \mid y_c, q)} \left[ U(y_r \mid q, \pi) \right] \nonumber \\
    &= \mathbb{E}_{q \sim P(Q)}  \mathbb{E}_{y_c \sim \pi_{c}( \cdot \mid q)} \mathbb{E}_{y_r \sim \pi_{r}( \cdot \mid y_c, q)} \left[ \sum_{d \in D} \alpha_{k(d)} R_{q,d} \right] \nonumber \\ 
    &= \mathbb{E}_{q \sim P(Q)}  \left[ \sum_{d \in D} \rho_{c,r}(d) R_{q,d} \right] ,
    \label{2stage-true-utility}
\end{align}
where $\rho_{c,r}(d) = \mathbb{E}_{y_c \sim \pi_{c}( \cdot \mid q)} \mathbb{E}_{y_r \sim \pi_{r}( \cdot \mid y_c, q)} \left[ \alpha_{k(d)} \right]$ is the new document weight, equivalent to the single-stage case (Eq.~\ref{true-utility}). Similar quantities appear in the literature on top-n recommendation evaluation~\cite{jeunen2024normalised}.

\subsection{Method: A novel two-stage \ac{CLTR}}
Given the true joint utility function for the two-stage ranking with the true relevance (Eq.~\ref{2stage-true-utility}), we now define the \ac{CLTR} objective with the logged data $\mathcal{D}$ (Eq.~\ref{logs}):
\begin{equation}
    \hat{U}(\pi_r, \pi_c) = \frac{1}{N} \sum_{i=1}^{N} \sum_{d \in D}  \frac{\rho_{r,c}(d)}{\rho_{0}(d)} c_i(d).
    \label{cltr-obj-2s}
\end{equation}
Similarly to the single-stage \ac{CLTR} objective (Eq.~\ref{cltr-obj}), we set the weight $\alpha_{k(d)} = P(E=1 \mid k(d)) $. As a result, the overall objective optimizes for expected clicks on relevant documents~\cite{oosterhuis2022doubly}. To the best of our knowledge, we are the first to introduce a two-staged \ac{CLTR} objective for real-world ranking systems.  

\begin{theorem}
\label{2s-unbiased}
The counterfactual objective for the two stage \ac{CLTR} method (Eq.~\ref{cltr-obj-2s}) is unbiased in expectation, i.e.,
\begin{equation}
   \mathbb{E}_{q, y \sim \pi_0, c} \left[ \hat{U}(\pi_r, \pi_c) \right] = U(\pi_r, \pi_c).
    \label{cltr-obj-unbias}
\end{equation}
\end{theorem}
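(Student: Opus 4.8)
The plan is to follow the standard recipe for proving unbiasedness of an \ac{IPS}-style estimator, exploiting the fact that the entire cascading (two-stage) structure has been folded into the single document weight $\rho_{r,c}(d)$, so that the argument collapses onto the single-stage policy-aware case of \citet{oosterhuis2020policy}. First I would invoke the \ac{i.i.d}\ assumption on the logged tuples $\{q_i, y_i, c_i\}_{i=1}^{N}$ together with linearity of expectation to reduce $\frac{1}{N}\sum_{i=1}^{N}(\cdot)$ to a single expectation over one draw $q \sim P(Q)$, $y \sim \pi_0(\cdot \mid q)$, and the clicks $c$ on that list. The goal then becomes showing
\begin{equation*}
\mathbb{E}_{q}\,\mathbb{E}_{y \sim \pi_0}\,\mathbb{E}_{c}\Bigl[\sum_{d \in D} \frac{\rho_{r,c}(d)}{\rho_0(d)}\, c(d)\Bigr] = U(\pi_c, \pi_r).
\end{equation*}

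Next I would take expectations from the inside out. Conditioning on $q$ and the logged list $y$, the examination hypothesis (Eq.~\ref{click-model}) gives $\mathbb{E}_{c}[c(d) \mid q, y] = P(E=1 \mid k(d))\,R_{q,d}$, where $k(d)$ is the rank of $d$ in $y$. Substituting this and then moving the expectation over $y \sim \pi_0$ inside the finite sum over $d \in D$, the quantities $\rho_{r,c}(d)$, $\rho_0(d)$ and $R_{q,d}$ are constants with respect to $y$ (they depend only on $q$ and on the fixed target and logging policies), so they factor out and what remains is $\mathbb{E}_{y \sim \pi_0}[P(E=1 \mid k(d))]$, which equals $\rho_0(d)$ by definition (Eq.~\ref{eq:propensity_def}). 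This cancels the denominator exactly, leaving $\mathbb{E}_{q}\bigl[\sum_{d \in D}\rho_{r,c}(d)\,R_{q,d}\bigr]$, which is precisely $U(\pi_c,\pi_r)$ by Eq.~\ref{2stage-true-utility}.

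The step I expect to require the most care --- and which I would flag as an explicit assumption rather than an obstacle per se --- is common support: the cancellation is only valid when $\rho_0(d) > 0$ for every $d$ with $\rho_{r,c}(d) > 0$. In the two-stage setting this is stronger than in the single-stage case, because a document can get zero logged exposure either because the logging re-ranker never ranks it within the displayed top-$K$ or because the logging candidate generator never retrieves it into the length-$K_2$ candidate list; one must therefore assume the logging two-stage system covers the support of the target two-stage system $(\pi_c,\pi_r)$. Apart from this, the derivation is routine and never needs to unpack the nested expectation: $\rho_{r,c}(d) = \mathbb{E}_{y_c \sim \pi_c}\mathbb{E}_{y_r \sim \pi_r(\cdot \mid y_c, q)}[\alpha_{k(d)}]$ is carried through as a single precomputed number for each $(q,d)$, exactly as $\rho(d \mid q, \pi)$ is in the single-stage proof.
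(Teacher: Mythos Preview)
Your proposal is correct and follows essentially the same route as the paper: pull the expectation over $(y,c)$ inside the sum, use the click model to obtain $\mathbb{E}_{y\sim\pi_0,c}[c(d)]=\rho_0(d)\,R_{q,d}$, cancel $\rho_0(d)$, and recognize the resulting expression as $U(\pi_c,\pi_r)$. Your write-up is in fact more explicit than the paper's two-line derivation (you spell out the i.i.d.\ reduction and the common-support caveat, which the paper leaves implicit).
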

\begin{proof}
    The expected value of the two-stage objective is given by:
    \begin{align}
   \mathbb{E}_{q, y \sim \pi_0, c} \left[ \hat{U}(\pi_r, \pi_c) \right] &= \mathbb{E}_{q} \left[ \sum_{d \in D}  \frac{\rho_{r,c}(d)}{\rho_{0}(d)} \mathbb{E}_{y \sim \pi_0, c} \left[ c(d) \right]  \right] \nonumber \\
   &= \mathbb{E}_{q} \left[ \sum_{d \in D}  \rho_{r,c}(d) R_{q,d}  \right],
    \end{align}
    where we apply the click model (Eq.~\ref{click-model}) in the first step.
\end{proof}

\noindent%
Intuitively, to maximize the objective in case of a click on a given document $d$, the optimizer will push the expected document weight under both policies $\rho_{r,c}(d)$ to the maximum. To maximize the expected weight, the candidate generator will push the document higher in the candidate list, which gives the re-ranker a chance to finally display the document to the user at a higher rank. 

The candidate generator can be selected as a lightweight model for efficiency, generating the candidate list quickly. The re-ranker can be a complex model that incorporates additional features to improve the precision of the results. However, for simplicity, we use the same model for both candidate generator and re-ranker and leave the effects of different architecture choices for future work. 
\subsection{Joint optimization for two-stage CLTR}
\label{proposed-method}
For optimization, we use stochastic gradient descent. To estimate the gradient with respect to the policy $\pi$, in this work, we make use of the general log-derivate trick of the REINFORCE algorithm~\cite{williams1992simple}, with the gradient of the metric weight term following:
\begin{equation}
\nabla_{\!\pi} \rho(d) = \mathbb{E}_{y \sim \pi} \big[ P(E = 1 \mid k)  \nabla_{\!\pi} \log \pi(y \mid q) \big].
\end{equation}
\citet{ma2020off} proposed to optimize the candidate generator policy while keeping the re-ranking policy constant during the optimization (the re-ranker is pre-trained on the same logged data). For the \ac{CLTR} context, this translates into estimating the gradient:
\begin{equation}
   \nabla_{\!\pi_c} \hat{U}(\pi_r, \pi_c) = \frac{1}{N} \sum_{i=1}^{N} \sum_{d \in D}  \frac{\nabla_{\!\pi_c} \rho_{r,c}(d)}{\rho_{0}(d)} c_i(d).
    \label{cltr-grad-2s}
\end{equation}
While this objective considers the feedback from the re-ranker for training the candidate generator, this strategy is overall sub-optimal. The re-ranker policy is pre-trained on the log data generated by the logging policy $\pi_0$. This means that it is optimized with respect to candidates generated by the logging policy during training. At inference time, if the candidate generator is switched to a different one, its performance might suffer because of the distribution shift. 

For two-stage optimization, if the re-ranker is kept fixed and the candidate is updated during training, then at test time, the distribution of the candidates will change, and the re-ranker might underperform. To remedy this problem, we propose a joint optimization method, where both the candidate and re-ranker are updated during training. We follow an alternating optimization method, wherein for each minibatch, the candidate generator and the re-ranker are updated alternatingly. This ensures that the re-ranker considers the candidate generator's feedback during training. 

\begin{table*}[h]
\setlength{\tabcolsep}{0.02cm}
\centering
\caption{
NDCG@10 performance of different two-stage \ac{CLTR} methods with varying candidate list sizes ($K_2$) and varying amounts of click logs ($N$). The numbers reported are averages over 25 independent runs.
}
\label{tab:dcgresults}
\resizebox{\textwidth}{!}{
\begin{tabular}{ l ccc  ccc ccc}
 \toprule
&\multicolumn{3}{c}{$K_{2}=500$}
&\multicolumn{3}{c}{$K_{2}=1,000$}
&\multicolumn{3}{c}{$K_{2}=1,500$}\\
\cmidrule(r){2-4}
\cmidrule(r){5-7}
\cmidrule{8-10}
Method & $N=0.1M$&  $N=0.32M$ &  $N=1M$&  $N=0.1M$ &  $N=0.32M$ &  $N=1M$&  $N=0.1M$ &  $N=0.32M$ &  $N=1M$\\
 \midrule
 \multicolumn{1}{l}{Baseline}&  0.404 \small (0.000) &  0.437 \small (0.001) &  0.470 \small (0.001) &  0.401 \small (0.001) &  0.440 \small (0.001) &  0.469 \small (0.001) &  0.400 \small (0.001) &  0.441 \small (0.001) &  0.474 \small (0.002)  \\ 
 \multicolumn{1}{l}{Indepen.}&  0.410 \small (0.001) &  0.454 \small (0.001) &  0.489 \small (0.001) &  0.416 \small (0.001) &  0.465 \small (0.015) &  0.496 \small (0.001) &  0.416 \small (0.001) &  0.455 \small (0.001) &  0.496 \small (0.001)  \\ 
 \multicolumn{1}{l}{Joint opt.}&  \textbf{0.420 \small (0.001)} &  \textbf{0.470 \small (0.001)} &  \textbf{0.504 \small (0.001)} &  \textbf{0.426 \small (0.001)} &  \textbf{0.479 \small (0.001)} &  \textbf{0.504 \small (0.001)} &  \textbf{0.424 \small (0.001)} &  \textbf{0.472 \small (0.001)} &  \textbf{0.510 \small (0.001)}  \\ 
 \bottomrule
\end{tabular}
    }
\end{table*}

\header{Gradient calculation}
As the choice of the ranking policies in the candidate generator and re-ranker, we choose a Plackett-Luce model, similar to previous work in \ac{CLTR}~\cite{oosterhuis2021computationally,oosterhuis2022doubly,gupta2023safe}. For optimization, we use stochastic gradient descent. For the gradient of the document weight with respect to the candidate policy, we apply the log-derivative/REINFORCE trick~\cite{williams1992simple,yadav2021policy,gupta2023safe}. The gradient expression is given as:
\begin{equation}
\label{cand-grad}
\begin{split}
& \nabla_{\!\pi_c} \rho_{r,c}(d) = {}\\
& \mbox{}\hspace*{3mm}
\mathbb{E}_{y_c \sim \pi_c} \big[ \mathbb{E}_{y_r \sim \pi_r(\cdot \mid y_c, q)} \left[ P(E = 1 \mid k) \right] \nabla_{\!\pi_c} \log \pi_c(y_c \mid q) \big].
\end{split}
\end{equation}
Similarly, the gradient of the document weight with respect to the re-ranker policy can be expressed as:
\begin{equation}
\label{ranker-grad}
\begin{split}
\mbox{}\hspace*{-2mm}
& \nabla_{\!\pi_r} \rho_{r,c}(d) = {}\\
\mbox{}\hspace*{-2mm}
& \mbox{}\hspace*{2mm}
\mathbb{E}_{y_c \sim \pi_c} \big[ \mathbb{E}_{y_r \sim \pi_r(\cdot \mid y_c, q)} \left[ P(E = 1 \mid k)   \nabla_{\!\pi_r}  \log \pi_r(y_r \mid y_c, q) \right] \big].
\end{split}
\end{equation}

\section{Experimental Setup}
For our experiments, we follow the semi-synthetic experimental setup, common in the \ac{CLTR} literature~\cite{joachims2017unbiased,oosterhuis2020policy,gupta2023safe,gupta2024practical}. The standard \ac{LTR} datasets involve a query and a pre-selected document list with the corresponding relevance judgments, for example, the MSLR30K dataset~\cite{qin2010letor}. The logging policy is trained on 3\% of the queries to simulate a production ranker, which is used to generate the logged dataset following a position-based user  model~\cite{craswell2008experimental,oosterhuis2020policy,joachims2017unbiased}. In the \ac{LTR} datasets, documents are already prefiltered from the entire document pool and in the order of $\mathbf{O}(100)$, which is not reflective of a real-world setup, where the candidate document set is of much larger order. As a result, the \ac{LTR} datasets are unsuitable for a realistic large-scale simulation of a two-staged \ac{CLTR} system. Instead, we rely on the MovieLens-1M recommender systems dataset for the simulation in this work. The dataset consists of user ratings of items. Following previous work~\cite{zou2020neural,zhang2019deep,zhao2013interactive}, we assume that the ratings reflect the \emph{true} user intent and are not biased by the recommendations to the user. We assume a rating value of $>$ 3 as a positive reward and a zero otherwise. We randomly sample $10\%$ of users with the \emph{true} rating as the ground truth for policy evaluation. This simulates a real-world setting where a production policy is trained on a small fraction of the \emph{ideal} relevance labels, then deployed to collect click signals~\cite{joachims2017unbiased,agarwal2019general}. We simulate a top-$K$ ranking setup~\cite{oosterhuis2020policy,gupta2023safe} where any item beyond the top-$K$ slot gets no exposure. The clicks are simulated via a position-based click model~\cite{craswell2008experimental,joachims2017unbiased}, where we define the examination probability as follows: 
\begin{equation}
    P(E=1 \mid q, d, y) = 
\begin{cases}
    \textrm{rank}(d \vert y)^{-1}& \text{if } \textrm{rank}(d \mid y) \leq 10,\\
    0              & \text{otherwise}.
\end{cases}    
\end{equation}
For our experiments, we define the query $q$ as the user context, and an item is a document $d$. Following previous work~\cite{oosterhuis2021robust,oosterhuis2021unifying,oosterhuis2022doubly,gupta2023safe}, we use frequency estimates for the propensity $\rho_0(d)$ (Eq.~\ref{cltr-grad-2s}). We use a simple matrix factorization model for the re-ranker and candidate generator policy, initialized with user and item embeddings of dimension 50, generated via applying SVD on the initial user-item rating matrix. Exploring the effect of different model architectures is out of scope and will be in future work. For estimating the policy gradient expectations (Eq.~\ref{cand-grad}, \ref{ranker-grad}), we use 300 Monte-Carlo samples~\cite{yadav2021policy,oosterhuis2022doubly,gupta2023safe}. For optimization, we use the Adam optimizer~\cite{kingma2014adam} with a learning rate of $0.01$ for all methods. Finally, the following methods are included in our comparisons:
\begin{enumerate}[label=(\arabic*), leftmargin=*]
    \item \emph{Baseline}. As a baseline, we follow the setup from previous work~\cite{ma2020off}, wherein we pre-train a re-ranker on the click data and only train the candidate generator using the two-stage objective (Eq.~\ref{cltr-obj-2s}) and evaluate via the two-stage \ac{LTR} objective (Eq.~\ref{2stage-true-utility}).  
     \item  \emph{Independent.} We train the candidate generator and reranker independently and evaluate via the two-stage objective (Eq.~\ref{2stage-true-utility}).
    \item  \emph{Joint optimization.} Our proposed method (Sect.~\ref{proposed-method}) is where we jointly optimize the candidate generator and re-ranker in an alternating optimization fashion.
\end{enumerate}
%

\section{Results}
Table~\ref{tab:dcgresults} reports the NDCG@10 results for different methods. We report the results with varying the candidate list sizes $K_2 \in \{500$, $1000$, $1500\}$ for realistic real-world evaluation. The table shows that the joint optimization method (Sect.~\ref{proposed-method}) performs the best compared to the other methods. 

The baseline method from the two-stage bandit work~\cite{ma2020off}, where the re-ranker is pretrained from the data generated from the logging candidate policy, performs worst among all methods, confirming our initial hypothesis that changing the candidate ranking distribution from training to inference can result in an overall performance drop for the re-ranker policy. The proposed joint optimization method for the candidate generator and the re-ranker performs the best across different candidate list sizes and logged data. This shows that the joint optimization strategy is optimal for the two-stage \ac{CLTR} methods. The performance of each method improves with increasing logged data size ($N=0.1M, 0.32M, 1M$), consistent with previous work~\cite{oosterhuis2020policy,joachims2017unbiased,gupta2023safe,jagerman2020safe,oosterhuis2021robust}.

Note that in this work, we use the same architecture for the candidate generator and the re-ranker, i.e., a matrix factorization model. Exploring the effects of different neural architectures for the candidate generator and re-ranker is beyond the scope of this work, as our focus is on introducing joint optimization for the two-stage CLTR. We will study it in future work. 

\section{Conclusion}
In this work, we have studied two-stage ranking systems from a \ac{CLTR} perspective. This is an important problem that has not been adequately discussed in the contemporary literature of CLTR; in this work, we address this gap. First, we propose a novel single \ac{LTR} two-stage objective function (Eq.~\ref{2stage-true-utility}) for optimizing \ac{LTR} systems with access to the \emph{ideal} relevance judgments. Next, we propose a two-stage \ac{CLTR} objective (Eq.~\ref{cltr-obj-2s}) that accounts for both the candidate generator and re-ranker jointly. To optimize the overall system, we propose a joint optimization strategy. We also provide the gradients for first-order optimization methods.

To the best of our knowledge, we are the first to tackle the problem of two-stage \ac{LTR} systems and a counterfactual learning and evaluation method for two-staged systems. As part of future work, we wish to extend the formulation from a contextual bandit setup to a reinforcement learning setup~\cite{gao2023alleviating}.

\begin{acks}
    This research was supported by the Dutch Research Council (NWO), under project numbers 024.004.022, NWA.1389.20.\-183, and KICH3.\-LTP.\-20.\-006, and the European Union's Horizon Europe program under grant agreement No 101070212.
    All content represents the opinion of the authors, which is not necessarily shared or endorsed by their respective employers and/or sponsors.
\end{acks}

\bibliographystyle{ACM-Reference-Format}
\balance
\bibliography{references}

\end{document}